\newtheorem{thm}{Theorem}[section]
\newtheorem{definition}{Definition}[section]
\newenvironment{proof}{{\noindent{\bf Proof:}}}{$\hfill\Box$}
\def\tr{{\text{tr}}}
\def\dim{{\text{dim}}}
\def\v{\vert}
\def\r{\rangle}
\def\group{\mathbf{F}_{q}^+\!\rtimes\mathbf{F}_q^{\times}}
\def\near{\mathbf{H}^{+}\!\rtimes \mathbf{H}^{\times}}
\begin{document}

\title{Indistinguishable Chargeon-Fluxion Pairs in the Quantum Double of Finite Groups }

\author{Salman Beigi\\ {\it \small Institute for Quantum Information}\\ {\it \small California Institute of
Technology}\\ {\it \small Pasadena, CA}
\and Peter W. Shor \\ {\it \small Department of Mathematics}\\ {\it \small Massachusetts Institute
of Technology}\\ {\it \small Cambridge, MA}
\and Daniel Whalen\\ {\it \small Department of Mathematics}\\ {\it \small Massachusetts Institute
of Technology}\\ {\it \small Cambridge, MA}}

\maketitle


\begin{abstract} We consider the category of finite dimensional representations of the quantum double of a finite group as a modular tensor category. We study auto-equivalences of this category whose induced permutations on the set of simple objects (particles) are of the special form of $PJ$, where $J$ sends every particle to its charge conjugation and $P$ is a transposition of a chargeon-fluxion pair. We prove that if the underlying group is the semidirect product of the additive and multiplicative groups of a finite field, then such an auto-equivalence exists. In particular, we show that for $S_3$ (the permutation group over three letters) there is a chargeon and a fluxion which are not distinguishable.
Conversely, by considering such permutations as modular invariants, we show that a transposition of a chargeon-fluxion pair forms a modular invariant if and only if the corresponding group is isomorphic to the semidirect product of the additive and multiplicative groups of a finite near-field.
\end{abstract}

\section{Introduction}

The \emph{quantum double construction} introduced by Drinfeld provides a systematic method to obtain solutions of the Yang-Baxter equation. In this construction for any finite group $G$ there is a corresponding Hopf algebra $D(G)$ which admits a universal $R$-matrix.
Drinfeld's double arises naturally in the study of $(2+1)$-dimensional quantum field theories when a continues gauge symmetry is spontaneously broken to a finite group \cite{bais1, bais2}. In this case, the excitations (particles) of the system are described by irreducible representations of $D(G)$, and their brading is given by the $R$-matrix. Also the fusion of particles can be computed by decomposing a tensor product representation into irreducible ones. More specifically, the category of finite dimensional representations of $D(G)$, denoted $\mathcal{Z}(G)$, forms a \emph{modular tensor category}, and describes the underlying algebraic structure of two-dimensional particles (anyons).

For any modular tensor category there are two matrices called the $S$-matrix and the $T$-matrix which provide a representation of the modular group $\text{SL}_2(\mathbb{Z})$:
\begin{align}
\left(
  \begin{array}{cc}
    0 & -1 \\
    1 & 0 \\
  \end{array}
\right)\mapsto S, \quad \quad  \left(
                           \begin{array}{cc}
                             1 & 1 \\
                             0 & 1 \\
                           \end{array}
                         \right) \mapsto T.
\end{align}
Matrices $M$ (with certain properties) that commute with both $S$ and $T$ are called \emph{modular invariant} and are studied extensively to classify modular invariant partition functions of a rational conformal field theory (see \cite{ostrik, davydov} and references there).

Assume that a modular tensor category admits a non-trivial auto-equivalence, that is, a permutation on simple objects (particles) under which the fusion rules and braidings remain invariant. Since matrices $S$ and $T$ are defined in terms of the category, such a permutation commutes with both of them and forms a \emph{permutational} modular invariant. In this paper we study auto-equivalences of $\mathcal{Z}(G)$. Such an auto-equivalence not only forms a modular invariant, but also means that the two labeling of particles describing by the corresponding permutation are \emph{not distinguishable}.

We consider auto-equivalences of $\mathcal{Z}(G)$ whose corresponding permutations over particles are of the following form: it first replaces every particle with its charge conjugation and then interchanges two certain particles where one of them is an electric charge (chargeon) and the other one is a magnetic flux (fluxion). We show that if $G$ is isomorphic to $\group$, the semidirect product of the additive group of a finite field and its multiplicative group, then $\mathcal{Z}(G)$ admits an auto-equivalence of the above form. Conversely, we consider these permutations as modular invariants and show that if the transposition of a chargeon-fluxion pair is a modular invariant, then the underlying group is isomorphic to $\near$ where $\mathbf{H}$ is a \emph{near-field} (near-fields will be defined consequently).

This paper is organized as follows. In the following section, after introducing some notations, we define $D(G)$ and describe its irreducible representations. We also explain the example of $G=S_3$ (the permutation group over three letters) which indeed is an example of our main result. In Section~\ref{sec:3} we first give some background from \cite{davydov} about the necessary and sufficient conditions for an equivalence between $\mathcal{Z}(G)$ and $\mathcal{Z}(G')$, for two groups $G, G'$, and then prove that there exists an auto-equivalence of $\mathcal{Z}(\group)$ of the particular type described above. In Section~\ref{sec:4} we classify all groups $G$ for which there exists a modular invariant that transposes a chargeon-fluxion pair. We also explain two non-trivial modular invariants in $A_6$ (the group of even permutations over six letters). Finial remarks and some open problems are discussed in Section~\ref{sec:conclusion}.

\section{Quantum double of a finite group}

Let us first fix some notations. $\mathbb{C}$ denotes the set of complex numbers and $\mathbb{C}^{\times}$ is its multiplicative group. $x^{\ast}$ is the complex conjugate of $x\in \mathbb{C}$ and $\v x\v^{2}=xx^{\ast}$. For a subgroup $K$ of a group $G$ and $g\in K$, $Z_K(g)$ denotes the centralizer of $g$ in $K$: $Z_{K}(g)=\{h\in K:\, hg=gh\}$. We write $g\overset{K}\sim g'$ if there exists $h\in K$ such that $hgh^{-1}=g'$. When $K=G$ and there is no confusion we drop $K$ in these notations ($Z_{G}(g)=Z(g)$ and $g\sim g'$ means $g\overset{G}\sim g'$). The conjugacy class of $g\in G$ is denoted by $\overline{g}$ ($\overline{g}=\{hgh^{-1}:\, h\in G\}$). In this paper all representations are over complex numbers, and for a representation $\rho$ of a group, $\rho^{\ast}$ denotes its complex conjugate representation. $\tr_{\rho}(\cdot)$ is the character of $\rho$, and $\mathbf{1}$ denotes the trivial representation ($\tr_{\mathbf{1}}(\cdot) =1$). $\delta$ denotes the Kronecker delta function, so for any relation $p$, $\delta_p =1$ if $p$ holds and otherwise $\delta_p = 0$. The size of a set $X$ is denoted by $\v X\v$. Finally, equivalence of categories, and isomorphism of groups and representations are shown by $\simeq$.

Let $G$ be a finite group. The quantum double or
Drinfeld double of $G$ denoted by $D(G)$, is a Hopf algebra containing $\mathbb{C}G$. $D(G)$ can be described by the $\mathbb{C}$-basis $\{gh^{\ast}: g,h \in G\}$ with the multiplication
\begin{align}
(g_1h_1^{\ast})(g_2h_2^{\ast})=\delta_{h_2=(g_2^{-1}h_1g_2)}\, (g_1 g_2)h_2^{\ast},
\end{align}
and the comultiplication
\begin{align}
\Delta(gh^{\ast}) = \sum_{h_1h_2=h} gh_1^{\ast}\otimes gh_2^{\ast}.\label{eq:comulti}
\end{align}
By $g\in D(G)$ we mean $g= \sum_h gh^{\ast}$, and $h^{\ast} = eh^{\ast}$ ($e$ is the identity of the group).
The unit of $D(G)$ is equal to $e=\sum_h e h^{\ast}$, the counit is given by $\varepsilon(g h^{\ast})=\delta_{h=e}$, and the antipode is
\begin{align}
\gamma(g h^{\ast})= g^{-1} (gh^{-1}g^{-1})^{\ast}.
\end{align}

$D(G)$ is quasi-triangular with the $R$-matrix
\begin{align}\label{eq:r-matrix}
R=\sum_{g\in G} g^{\ast}\otimes g.
\end{align}

\subsection{Representations of $D(G)$}

Consider an element $a\in G$ and let $\pi$ be a representation of $Z(a)$ over the vector space $W$ with the basis
$\{w_1, \dots, w_d\}$. Define the vector space $V_{( \overline{a}, \pi )}$ with the basis $\{\v b, w_i\r: b\in \overline{a},\, 1\leq i\leq d \}$. Then $V_{(\overline{a}, \pi)}$ is a representation of $D(G)$: for any $b\in \overline{a}$ fix $k_b\in G$ such that $b= k_b a
k_b^{-1}$. (Let $k_a = e$.) Observe that $k_{gbg^{-1}}^{-1}gk_b$ is always in $Z(a)$ and then for any $i=1,\dots , d$, $b\in
\overline{a}$, and $gh^{\ast} \in D(G)$ define
\begin{align}
g h^{\ast} \v b , w_i\r = \delta_{h= b}\, \v gbg^{-1} ,  (k_{gbg^{-1}}^{-1}gk_b)\, w_i\r.
\end{align}
$\chi_{(\overline{a}, \pi)}$, the character of this representation, is given by
\begin{align}
\chi_{(\overline{a}, \pi)} (gh^{\ast}) =  \delta_{h\in \overline{a}}\, \delta_{gh=hg}\, \tr_{\pi} (k_h^{-1} g k_h).
\end{align}

If the representation $\pi$ of $Z(a)$ is irreducible, then the representation $V_{(\overline{a}, \pi)}$ of $D(G)$ is irreducible as well. Conversely, all irreducible representations of $D(G)$ are of the above form and are indexed by conjugacy classes of $G$ and irreducible representations of the centralizer of a fixed element in the corresponding conjugacy class (see for example \cite{bakalov}). The trivial representation of $D(G)$ is indexed by $\mathbf{0}=(e, \mathbf{1})$. Moreover, the \emph{charge conjugation} of $(\overline{a}, \pi)$, which we denote by $(\overline{a}, \pi)^{\vee}$, is isomorphic to $(\overline{a^{-1}},\pi^{\ast})$.

Irreducible representations $(\overline{a}, \pi)$ are considered as particles in a two-dimensional space (anyons). $\overline{a}$ is called the magnetic flux of $(\overline{a}, \pi)$, and $\pi$ is its electric charge. A particle with the trivial magnetic flux ($a=e$) is called a \emph{chargeon}, and if $\pi=\mathbf{1}$, $(\overline{a}, \pi)$ is called a \emph{fluxion}.

\subsection{Fusion rules}

Let $(\overline{a}, \pi)$ and $(\overline{a'}, \pi')$ be two irreducible representations of $D(G)$. Then $(\overline{a}, \pi)\otimes (\overline{a'}, \pi')$ as a new representation of $D(G)$ is isomorphic to the direct sum of irreducible representations:
\begin{align}
(\overline{a}, \pi)\otimes (\overline{a'}, \pi') \simeq \bigoplus_{(\overline{h}, \rho)}
N_{(\overline{a}, \pi)(\overline{a'}, \pi')}^{(\overline{h}, \rho)} (\overline{h},
\rho).
\end{align}
Due to the comultiplication defined in~\eqref{eq:comulti}, if $N_{(\overline{a}, \pi)(\overline{a'}, \pi')}^{(\overline{h}, \rho)} $ is non-zero, $\overline{h}$ has a non-empty intersection with $\overline{a}\, \overline{a'}$, the product of the two conjugacy classes. However, to exactly compute the fusion rules we should use the {\it Verlinde formula}.

Define the matrix $S$ whose rows and columns are indexed by irreducible representations of $D(G)$ and
\begin{align}\label{eq:s-matrix}
S_{(\overline{a}, \pi)(\overline{a'}, \pi')} = \frac{1}{\v Z(a)\v \cdot \v Z(a')\v} \sum_{h:\, ha'h^{-1}\in Z(a)} \tr_{\pi}(ha'^{-1}h^{-1})\tr_{\pi'}(h^{-1}a^{-1}h).
\end{align}
Then $N_{XY}^Z$ can be computed in terms of $S$:
\begin{align}\label{eq:verlinde}
N_{XY}^{Z}=\sum_U \frac{S_{XU}S_{YU}S_{ZU}^{\ast}}{S_{\mathbf{0}U}},
\end{align}
where the summation is over all irreducible representations, and $\mathbf{0}=(e, \mathbf{1})$ is the trivial representation.

\subsection{$\mathcal{Z}(G)$}

$\mathcal{Z}(G)$ denotes the category of finite dimensional representations of $D(G)$ over complex numbers. $\mathcal{Z}(G)$ is a fusion category. Moreover, the $R$-matrix~\eqref{eq:r-matrix} defines the braiding $R_{XY}: X\otimes Y \rightarrow X\otimes Y$ for every two representations $X, Y$. As a result, $\mathcal{Z}(G)$ is a modular tensor category (see \cite{bakalov} for details).

\subsection{Example: $\mathcal{Z}(S_3)$}\label{sec:s3}

Let $G=S_3$, the permutation group over three letters: $S_3= \langle \sigma, \tau: \sigma^2=\tau^3=e, \sigma \tau = \tau^{-1}\sigma \rangle $. $D(S_3)$ has eight irreducible representations described in the following table.
\begin{align*}
\begin{array}{|c|ccc|cc|ccc|}
\hline
      & A & B & C & D & E  & F & G & H   \\
\hline
\text{conjugacy class} & e & e & e & \overline{\sigma} & \overline{\sigma} &  \overline{\tau} & \overline{\tau} & \overline{\tau}   \\
 \text{irrep of the centralizer}     & \mathbf{1} & sign & \pi & \mathbf{1} & [-1] &  \mathbf{1} & [\omega] & [\omega^{\ast}]  \\
\hline
\end{array}
\end{align*}
Here $sign$ denotes the sign representation, $\pi$ is the two-dimensional representation of $S_3$, and $[-1]$, and $[\omega], [\omega^{\ast}]$ denote the non-trivial representations of $Z(\sigma)=\{e, \sigma\}$ and $Z(\tau)=\{e, \tau, \tau^{-1}\}$. The corresponding $S$-matrix is \begin{align}\label{eq:s-s3}
S = \frac{1}{6} \left(\begin{array}{ccc|cc|cccc}
      1 & 1 & 2 &  3 &  3 &  2 & 2 & 2   \\
      1 & 1 & 2 & -3 & -3 &  2 & 2 & 2  \\
      2 & 2 & 4  & 0 & 0 &  -2 & -2 & -2  \\
      \hline
      3 & -3 & 0 &  3 & -3 &  0 & 0 & 0 \\
      3 & -3 & 0  & -3 & 3  & 0 & 0 & 0 \\
      \hline
      2 & 2 & -2  & 0 & 0  & 4 & -2 & -2 \\
      2 & 2 & -2  & 0 & 0  & -2 & -2 & 4 \\
      2 & 2 & -2  & 0 & 0  & -2 & 4 & -2
\end{array}\right),
\end{align}
and then using the Verlinde formula~\eqref{eq:verlinde} the fusion rules can be computed.

{\small
\begin{align*}
\begin{array}{|c|ccc|cc|ccc|}
\hline
  \otimes    & A & B & C  & D & E  & F & G & H   \\
\hline
 A   & {\scriptstyle A} & {\scriptstyle B} & {\scriptstyle C} & {\scriptstyle D} & {\scriptstyle E} &  {\scriptstyle F} & {\scriptstyle G} & {\scriptstyle H}   \\
 B    & {\scriptstyle B} & {\scriptstyle A} & {\scriptstyle C} & {\scriptstyle E} & {\scriptstyle D} &  {\scriptstyle F} & {\scriptstyle G} & {\scriptstyle H}  \\
 C    & {\scriptstyle C} & {\scriptstyle C} & {\scriptstyle A\oplus B\oplus C}  & {\scriptstyle D\oplus E} & {\scriptstyle D\oplus E} &  {\scriptstyle G\oplus H} & {\scriptstyle F\oplus H} & {\scriptstyle F\oplus G}  \\
\hline
 D    & {\scriptstyle D} & {\scriptstyle E} & {\scriptstyle D\oplus E} &  {\scriptstyle A\oplus C\oplus F\oplus G\oplus H} & {\scriptstyle B\oplus C\oplus F\oplus G\oplus H} &  {\scriptstyle D\oplus E} & {\scriptstyle D\oplus E} &
 {\scriptstyle D\oplus E} \\
 E    & {\scriptstyle E} & {\scriptstyle D} & {\scriptstyle D\oplus E}  & {\scriptstyle B\oplus C\oplus F\oplus G\oplus H} & {\scriptstyle A\oplus C\oplus F\oplus G\oplus H}  & {\scriptstyle D\oplus E} & {\scriptstyle D\oplus E} & {\scriptstyle D\oplus E} \\
\hline
 F    & {\scriptstyle F} & {\scriptstyle F} & {\scriptstyle G\oplus H}  & {\scriptstyle D\oplus E} & {\scriptstyle D\oplus E}  & {\scriptstyle A\oplus B\oplus F} & {\scriptstyle H\oplus C} & {\scriptstyle G\oplus C} \\
 G    & {\scriptstyle G} & {\scriptstyle G} & {\scriptstyle F\oplus H}  & {\scriptstyle D\oplus E} & {\scriptstyle D\oplus E}  & {\scriptstyle H\oplus C} & {\scriptstyle A\oplus B\oplus G} & {\scriptstyle F\oplus C} \\
 H    & {\scriptstyle H} & {\scriptstyle H} & {\scriptstyle F\oplus G}  & {\scriptstyle D\oplus E} & {\scriptstyle D\oplus E}  & {\scriptstyle G\oplus C} & {\scriptstyle F\oplus C} & {\scriptstyle A\oplus B\oplus H} \\
\hline
\end{array}
\end{align*}
}

The are several symmetries in this fusion table. In particular, by exchanging $C$ and $F$ we obtain the same table. This fact can also be seen from the $S$-matrix~\eqref{eq:s-s3}. If we let $P$ to be the permutation matrix corresponding to the transposition $(C, F)$, then $PSP^{-1}=S$. On the other hand, by the Verlinde formula the fusion rules are computed in terms of $S$, so since $S$ is invariant under $P$, the fusion rules are also symmetric with respect to the transposition $(C, F)$. In the following section we prove that this symmetry can be extended to an auto-equivalence of the whole modular tensor category.

\section{A non-trivial auto-equivalence of $\mathcal{Z}(\group)$}\label{sec:3}

In this section we show that for some class of groups $G$ there exists an auto-equivalence of $\mathcal{Z}(G)$ that sends every particle to its charge conjugation and interchanges a particular chargeon-fluxion pair.

\subsection{Equivalence of $\mathcal{Z}(G)$ and $\mathcal{Z}(G')$}

The corresponding modular tensor categories to groups $G$ and $G'$ can be equivalent even if $G$ and $G'$ are not isomorphic. Necessary and sufficient conditions for such an equivalence have been first established by Naidu and Nikshych \cite{lagrangian, naidu}. Here we present our result based on tools recently developed by Davydov \cite{davydov}. Initiated by Ostrik \cite{ostrik}, Davydov's work classifies all trivializing algebras in $\mathcal{Z}(G)$ and provides a formula to compute the corresponding characters. Using this framework one can not only find a possible equivalence between $\mathcal{Z}(G)$ and $\mathcal{Z}(G')$, but also explicitly compute the action of the corresponding equivalence on simple objects.

Before stating the main theorem regarding the equivalence of $\mathcal{Z}(G)$ and $\mathcal{Z}(G')$ we need some definitions. For a modular tensor category $\mathcal{C}$, $\mathcal{C}^{op}$ denotes the opposite category. It is known that $\mathcal{Z}(G)^{op}\simeq \mathcal{Z}(G)$ (such an equivalence will be described later).
For two categories $\mathcal{C}$ and $\mathcal{C'}$, $\mathcal{C}\boxtimes \mathcal{C'}$ is a category whose objects and morphisms are defined componentwise. $\mathcal{Z}(G)\boxtimes \mathcal{Z}(G')$ is equivalent to $\mathcal{Z}(G\times G')$ because the irreducible representations of $D(G\times G')$ are of the form $X\boxtimes X'$ where $X$ and $X'$ are irreducible representations of $D(G)$ and $D(G')$ respectively, and the $R$-matrix can be expressed componentwise as well. We refer the reader to the original paper \cite{davydov} for the definitions of trivializing algebras, the category of local modules, and the left and right parents of a given algebra.

\begin{thm} \cite{davydov, correspondence} An equivalence between two modular tensor categories $\mathcal{C}^{op}$ and $\mathcal{C'}$ corresponds to a trivializing algebra $A$ in $\mathcal{C}\boxtimes\mathcal{C'}$ such that the categories of local modules of the left and right parents of $A$ are equal to $\mathcal{C}$ and $\mathcal{C'}$ respectively.
\end{thm}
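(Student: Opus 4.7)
The plan is to prove this as a correspondence in both directions and then to argue that the two constructions are mutually inverse. I will follow the general Davydov--Ostrik philosophy, in which equivalences of modular tensor categories are encoded as Lagrangian (trivializing) algebras in an enveloping category.

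\textbf{Forward direction.} Given an equivalence $F\colon\mathcal{C}^{op}\to\mathcal{C'}$, I would build its ``graph algebra'' in $\mathcal{C}\boxtimes\mathcal{C'}$. A natural candidate is
\begin{align*}
A_F \;=\; \bigoplus_{X\in\text{Irr}(\mathcal{C})} X\boxtimes F(X^{\vee}),
\end{align*}
with multiplication induced by the evaluation morphisms $X^{\vee}\otimes X\to \mathbf{1}$ transported through $F$, and unit coming from the summand $\mathbf{1}\boxtimes F(\mathbf{1})\simeq \mathbf{1}\boxtimes\mathbf{1}$. The fact that $F$ is a braided equivalence of $\mathcal{C}^{op}$ with $\mathcal{C'}$ translates into $A_F$ being commutative, separable, and having trivial twist, which by Davydov's criterion makes it a trivializing algebra. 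The left parent of $A_F$ sits inside $\mathcal{C}\boxtimes\mathbf{1}$ and captures the ``$\mathcal{C}$-side'' of the graph, while the right parent captures the ``$\mathcal{C'}$-side''; computing their local modules should recover $\mathcal{C}$ and $\mathcal{C'}$ respectively, because modding out by the left parent undoes the $F$-twist of the first factor and returns the second.

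\textbf{Reverse direction.} Given a trivializing algebra $A$ in $\mathcal{C}\boxtimes\mathcal{C'}$ with the stated local-module property, I would extract an equivalence $\mathcal{C}^{op}\to\mathcal{C'}$ as follows. Because $A$ is trivializing, the category of $A$-local modules in $\mathcal{C}\boxtimes\mathcal{C'}$ is equivalent to $\text{Vec}$. The free-module functor $X\mapsto X\otimes A$ restricted to $\mathcal{C}\boxtimes \mathbf{1}$ factors through local modules of the left parent (by hypothesis equal to $\mathcal{C}$), and similarly on the $\mathcal{C'}$ side; comparing the two factorizations through the trivial category $\text{Vec}$ yields a braided equivalence between $\mathcal{C}^{op}$ and $\mathcal{C'}$. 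Concretely, for each simple $X\in\mathcal{C}$ one isolates the unique simple $Y\in\mathcal{C'}$ such that $X\boxtimes Y$ appears in $A$, and defines $F(X)=Y$; the algebra structure of $A$ then furnishes the tensor and braiding constraints making $F$ an equivalence.

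\textbf{Mutual inverseness.} Starting from $F$, building $A_F$, and reading off the equivalence from $A_F$ should return $F$ up to isomorphism because the simple summands of $A_F$ record exactly the pairs $(X,F(X^{\vee}))$. Conversely, starting from $A$ and taking $A_F$ for the resulting $F$ should reproduce $A$, since the decomposition of a trivializing algebra into simples of $\mathcal{C}\boxtimes\mathcal{C'}$ is essentially forced by the condition that its local-module parents are $\mathcal{C}$ and $\mathcal{C'}$.

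\textbf{Main obstacle.} The delicate part is the verification that the local-module categories of the left and right parents recover $\mathcal{C}$ and $\mathcal{C'}$ exactly, rather than only containing them up to further quotient. This requires matching global dimensions: one must use $\dim(A)^{2}=\dim(\mathcal{C})\cdot\dim(\mathcal{C'})$, which follows from $A$ being trivializing, together with the multiplicativity of dimensions of local-module categories under taking parents. The rest is essentially bookkeeping of braided algebra structures in $\mathcal{C}\boxtimes\mathcal{C'}$; this dimensional compatibility is the step I expect to be genuinely non-trivial, and it is where Davydov's framework does the real work.
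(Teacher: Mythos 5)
First, a point of reference: the paper does not prove this theorem. It is imported verbatim from Davydov \cite{davydov} and Fr\"ohlich--Fuchs--Runkel--Schweigert \cite{correspondence}, so there is no in-paper argument to compare your proposal against. Judged on its own, your outline does follow the strategy of those references: the ``graph algebra'' $\bigoplus_X X\boxtimes F(X^{\vee})$ is the standard Lagrangian algebra attached to a braided equivalence, the converse does proceed by reading off a permutation-type decomposition of $A$, and the global-dimension identity $(\dim A)^2=\dim\mathcal{C}\cdot\dim\mathcal{C'}$ is the correct numerical criterion for $A$ to be trivializing.

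However, as a proof the text defers exactly the steps that carry the content. (i) In the forward direction, commutativity and associativity of the multiplication on $A_F$ are not formal consequences of $F$ being an equivalence; it is precisely the reversal of the braiding in $\mathcal{C}^{op}$ that makes the evaluation-induced product commutative, and this must be checked, not announced. (ii) In the reverse direction, the assertion that each simple $X$ pairs with a \emph{unique} simple $Y$, with multiplicity one, is the crux of the whole theorem. A trivializing algebra alone only yields a non-negative integer modular invariant matrix $Z$ with $Z_{\mathbf{0}\mathbf{0}}=1$, and such matrices need not be permutation matrices. What forces permutation type is the hypothesis that the local modules of the left and right parents exhaust $\mathcal{C}$ and $\mathcal{C'}$ (equivalently, that both parents are the unit algebra, by the dimension formula for categories of local modules); your sketch states this hypothesis but never actually deploys it at the step where it is needed. (iii) The claim that the algebra structure of $A$ ``furnishes the tensor and braiding constraints'' for $F$, and the mutual-inverseness paragraph, are descriptions of what must be proved rather than arguments. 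So the proposal is a correct road map of the cited proof, but to close it you would need the structure theory of connected commutative separable algebras and their local modules developed in \cite{davydov} and \cite{correspondence}; none of that machinery is reconstructed here.
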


The next step toward finding an equivalence between $\mathcal{Z}(G)$ and $\mathcal{Z}(G')$ (notice that $\mathcal{Z}(G)^{op}\simeq \mathcal{Z}(G)$) is a classification of trivializing algebras in $\mathcal{Z}(G)\boxtimes \mathcal{Z}(G')\simeq \mathcal{Z}(G\times G')$.

Recall that for a group $G$, a $2$-cocycle $\varphi\in H^{2}(G, \mathbb{C}^{\times})$ is a map from $G\times G$ to non-zero complex numbers such that
\begin{align}\label{eq:cocycle}
\varphi(fg, h) \varphi(f,g) = \varphi(f, gh) \varphi(g, h).
\end{align}
For a $2$-cocycle $\varphi$ we let $\varphi(g\v h) = \varphi(g, h)\varphi(ghg^{-1}, g)^{-1}$.

\begin{thm} \cite{davydov} Trivializing algebras in $\mathcal{Z}(G)$ are in one-to-one correspondence with subgroups $K$ of $G$ and $\varphi\in H^{2}(K, \mathbb{C}^{\times})$. The character of such an algebra $A(K, \varphi)$ as a representation of $D(G)$ is given by
\begin{align}\label{eq:character}
\chi_{A(K, \varphi)}(gh^{\ast}) = \frac{1}{\v K\v}\, \delta_{gh=hg} \sum_{\substack{x:\, xgx^{-1}\in K\\ xhx^{-1}\in K}} \varphi(xgx^{-1}\v xhx^{-1}).
\end{align}
\end{thm}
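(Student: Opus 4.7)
The plan is to establish the bijection by exhibiting a concrete model for $A(K,\varphi)$ inside $\mathcal{Z}(G)$, reading off its character by direct trace computation, and appealing to Ostrik's classification of module categories for the converse direction. For the construction, I would use the equivalence $\mathcal{Z}(G) \simeq Z(\mathrm{Vec}_G)$, under which an object is a $G$-graded vector space $V = \bigoplus_{g \in G} V_g$ carrying a $G$-action with $g \cdot V_h \subseteq V_{ghg^{-1}}$, and the braiding sends $v \otimes w \mapsto (g \cdot w) \otimes v$ for homogeneous $v \in V_g$. In this picture, $A(K,\varphi)$ is obtained by inducing the twisted group algebra $\mathbb{C}^{\varphi}[K]$ from $D(K)$ up to $D(G)$; equivalently, as a $G$-equivariant object it is supported on $\bigcup_{x \in G} xKx^{-1}$ with multiplicities determined by the coset structure of $G/K$. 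Associativity of the resulting multiplication is exactly the 2-cocycle condition~\eqref{eq:cocycle}, while $R$-commutativity follows from the observation that for commuting $k_1,k_2 \in K$ the ratio $\varphi(k_1,k_2)/\varphi(k_2,k_1)$ cancels the $R$-matrix phase picked up by swapping the fibers over $k_1$ and $k_2$.

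To derive the character formula I would compute $\chi_{A(K,\varphi)}(gh^{\ast}) = \tr_{A(K,\varphi)}(gh^{\ast})$ on the induced-module basis labeled by pairs $(xK, k)$ with $x \in G/K$ and $k \in K$. The projector $h^{\ast}$ kills such a basis vector unless the flux carried by it equals $h$, which forces $k = xhx^{-1} \in K$. The action of $g$ then translates $(xK,k)$ to a basis vector indexed by $(gxK, k')$ for some $k' \in K$, so a non-zero diagonal contribution requires $gxK = xK$, i.e.\ $xgx^{-1} \in K$. The resulting scalar on the diagonal encodes how $\varphi$ twists the regular action of $K$ under simultaneous $g$- and $h$-conjugation, and after unwinding the conjugation conventions it equals $\varphi(xgx^{-1}\v xhx^{-1})$. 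Summing over $x$ and dividing by $|K|$ (so as to replace the sum over coset representatives by one over $G$) yields the stated formula; the Kronecker factor $\delta_{gh=hg}$ appears automatically, since non-vanishing contributions force $xgx^{-1}$ and $xhx^{-1}$ (both elements of $K$) to commute, and conjugation preserves commutation.

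For the converse — that every trivializing algebra in $\mathcal{Z}(G)$ arises as some $A(K,\varphi)$ — I would invoke the correspondence between commutative algebras in $\mathcal{Z}(G) \simeq Z(\mathrm{Vec}_G)$ and module categories over $\mathrm{Vec}_G$: under this correspondence, trivializing algebras of dimension $|G|$ match indecomposable module categories whose underlying abelian category has a single simple object up to the $\mathrm{Vec}_G$-action. By Ostrik's classification, such module categories are parametrized, up to $G$-conjugacy, by pairs $(K, \varphi)$ with $K \leq G$ and $\varphi \in H^2(K, \mathbb{C}^{\times})$, which is exactly the claimed parameter space. I expect the main obstacle to lie in the character computation itself: tracking the cocycle twists through the induction functor and landing on the specific combination $\varphi(g,h)\varphi(ghg^{-1},g)^{-1}$ rather than some other arrangement requires careful sign- and convention-tracking, and small errors in the order of conjugations propagate straight into the final formula.
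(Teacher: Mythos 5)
The paper itself offers no proof of this theorem: it is imported verbatim from Davydov \cite{davydov}, so there is nothing internal to compare your argument against. Judged on its own, your strategy is the standard and correct one: identify trivializing (Lagrangian) algebras in $\mathcal{Z}(G)\simeq Z(\mathrm{Vec}_G)$ with indecomposable module categories over $\mathrm{Vec}_G$, invoke Ostrik's parametrization by pairs $(K,\varphi)$ (your remark that the bijection is really with conjugacy classes of such pairs is correct, and is a point the theorem as quoted glosses over), and compute the character on the induced model $\bigoplus_{x}\mathbb{C}^{\varphi}[xKx^{-1}]$. Your accounting of which basis vectors survive the projector $h^{\ast}$, which cosets are fixed by $g$, and why the factor $\delta_{gh=hg}$ appears is all sound.

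The gap is exactly where you suspect it is, and it is not merely cosmetic: the entire content of the character formula is the identity you defer, namely that the diagonal scalar by which $g$ acts on the surviving basis vector equals $\varphi(xgx^{-1}\vert xhx^{-1})=\varphi(xgx^{-1},xhx^{-1})\,\varphi\bigl(xgx^{-1}\cdot xhx^{-1}\cdot xg^{-1}x^{-1},\,xgx^{-1}\bigr)^{-1}$. Establishing this requires writing down the $\varphi$-twisted $G$-action on the induced algebra explicitly and checking that commutativity with respect to the braiding forces precisely this combination; until that is done the theorem is only proved up to an undetermined function on commuting pairs of $K$. Two further points need attention. First, your own sketch already contains a convention slip of the kind you warn about: $gxK=xK$ is equivalent to $x^{-1}gx\in K$, not $xgx^{-1}\in K$, so either the fiber over the coset of $x$ must be $x^{-1}Kx$ or the sum must be reorganized over right cosets for the stated formula to come out. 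Second, replacing the sum over coset representatives by $\frac{1}{\vert K\vert}\sum_{x}$ over all of $G$ tacitly uses that $\varphi(\cdot\vert\cdot)$, restricted to commuting pairs of $K$, is invariant under simultaneous conjugation by elements of $K$ (equivalently, depends only on the class of $\varphi$ in $H^{2}(K,\mathbb{C}^{\times})$); this is a standard lemma in projective representation theory, but it should be stated, since it is also what makes the right-hand side of the character formula well defined in the first place.
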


According to this theorem to establish an equivalence between $\mathcal{Z}(G)^{op}\simeq \mathcal{Z}(G)$ and $\mathcal{Z}(G')$ we should find a subgroup $U\subseteq G\times G'$ and $\varphi\in H^2(U, \mathbb{C}^{\times})$ such that the categories of local modules over the parents of $A(U, \varphi)$ are the whole categories $\mathcal{Z}(G)$ and $\mathcal{Z}(G')$. Davydov \cite{davydov} has found the conditions under which the algebra $A(U, \varphi)$ satisfies the above properties, and then concluded the following theorem.

\begin{thm} \label{thm:automorphism} \cite{davydov} An equivalence between $\mathcal{Z}(G)^{op}$ and $\mathcal{Z}(G')$ corresponds to a subgroup $U\subseteq G\times G'$, and $\varphi \in H^{2}(U, \mathbb{C}^{\times})$ such that
\begin{enumerate}
\item the projections of $U$ onto the first and second components are equal to $G$ and $G'$ respectively,
\item the restriction of $\varphi(\cdot \v \cdot) $ on $(U\cap (G\times \{e\}))\times (U\cap ( \{e\}\times G'))$ is non-degenerate.
\end{enumerate}

Moreover, if such a $U$ and $\varphi$ exist, the map of the corresponding equivalence over simple objects can be computed by decomposing $\chi_{A(U, \varphi)}$ into irreducible characters of $D(G\times G')$; if ${X\boxtimes X'}$, where $X$ and $X'$ are irreducible representations of $D(G)$ and $D(G')$, appears in this decomposition then the equivalence sends $X$ to $X'$.
\end{thm}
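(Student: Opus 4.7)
The plan is to combine the two previous theorems to translate the existence of an equivalence $\mathcal{Z}(G)^{op}\simeq \mathcal{Z}(G')$ into a combinatorial condition on a pair $(U,\varphi)$. By the preceding theorem, such an equivalence corresponds to a trivializing algebra $A$ in $\mathcal{Z}(G)\boxtimes \mathcal{Z}(G')\simeq \mathcal{Z}(G\times G')$ whose left and right parents have categories of local modules equal to $\mathcal{Z}(G)$ and $\mathcal{Z}(G')$ respectively. By the classification of trivializing algebras in a double, every such $A$ is of the form $A(U,\varphi)$ for some subgroup $U\subseteq G\times G'$ and some $\varphi\in H^2(U,\mathbb{C}^{\times})$. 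Therefore the real content of the statement is to show that the parent/local-module conditions are equivalent to items (1) and (2), and to identify the induced map on simple objects.

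For condition (1), I would examine the two canonical subalgebras determined by $U\cap(G\times\{e\})$ and $U\cap(\{e\}\times G')$, which are the natural candidates for the left and right ``supports'' of $A(U,\varphi)$, and identify the parents of $A(U,\varphi)$ in terms of them. The key observation is that the image of the first projection of $U$ controls which conjugacy classes of $G$ can appear in the support of $\chi_{A(U,\varphi)}$, via the indicator $\delta_{xgx^{-1}\in K}$ inside the character formula \eqref{eq:character}. If the first projection misses any element of $G$, then some chargeon or fluxion in $\mathcal{Z}(G)$ is forced to lie outside the category of local modules of the left parent, so the parent condition fails. Symmetrically for the second projection, giving condition (1).

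For condition (2), once both projections are surjective, what determines whether the category of local modules actually fills out $\mathcal{Z}(G)$ (rather than only a proper modular subcategory) is the mutual interaction of the two intersections $U\cap(G\times\{e\})$ and $U\cap(\{e\}\times G')$ encoded by the bilinear expression $\varphi(\cdot \vert \cdot)$. Inserting such elements into \eqref{eq:character} and using the orthogonality of characters of $D(G\times G')$ reduces the multiplicities $\langle \chi_{A(U,\varphi)}, \chi_{X\boxtimes X'}\rangle$ to sums of values of $\varphi(\cdot\vert\cdot)$ on this product. Non-degeneracy of $\varphi(\cdot\vert\cdot)$ is exactly the condition that makes these sums either $0$ or $1$, producing a multiplicity-free decomposition that pairs up irreducibles of $D(G)$ with irreducibles of $D(G')$; this bijection is then the map on simple objects of the equivalence, establishing the ``moreover'' clause.

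The main obstacle I expect is the careful identification of the left and right parents of $A(U,\varphi)$ and the verification that passing to their local modules returns $\mathcal{Z}(G)$ and $\mathcal{Z}(G')$ on the nose. This is a Morita-theoretic computation inside Davydov's framework: one must show that the embeddings $G\hookrightarrow G\times G'$ and $G'\hookrightarrow G\times G'$ induce, after localization, equivalences with the original modular categories. Once this parent step is in place, the remainder of the argument reduces to the character calculation above together with a standard Verlinde-style decomposition, so the bulk of the technical work sits in the structural lemma about parents rather than in the combinatorics of $(U,\varphi)$.
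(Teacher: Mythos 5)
First, a point of reference: the paper does not prove this theorem at all. It is imported from Davydov \cite{davydov} together with the correspondence result of \cite{correspondence}; the text explicitly says that Davydov ``has found the conditions under which the algebra $A(U,\varphi)$ satisfies the above properties, and then concluded the following theorem.'' So there is no in-paper proof to compare yours against, and your proposal has to be judged on its own terms as an attempted reconstruction of the cited result.

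On those terms, your skeleton is the right one: combine the correspondence theorem (equivalences $\mathcal{Z}(G)^{op}\simeq\mathcal{Z}(G')$ correspond to trivializing algebras in $\mathcal{Z}(G\times G')$ whose left and right parents have categories of local modules $\mathcal{Z}(G)$ and $\mathcal{Z}(G')$) with the classification of trivializing algebras as $A(U,\varphi)$, and then translate the parent conditions into items (1) and (2). But that translation \emph{is} the theorem, and your proposal does not carry it out. You never identify the left and right parents of $A(U,\varphi)$ as algebras in $\mathcal{Z}(G)$ and $\mathcal{Z}(G')$ (they are governed by the intersections $U\cap(G\times\{e\})$ and $U\cap(\{e\}\times G')$ together with the restricted cocycle, not merely by the projections), you never compute their categories of local modules, and you explicitly defer exactly this step as ``the main obstacle.'' The two heuristics offered in its place are also not arguments: the observation that the projections control the support of $\chi_{A(U,\varphi)}$ does not by itself show that a non-surjective projection obstructs the local-module condition, and the assertion that non-degeneracy of $\varphi(\cdot\,\vert\,\cdot)$ ``is exactly the condition'' forcing the multiplicities $\langle\chi_{A(U,\varphi)},\chi_{X\boxtimes X'}\rangle$ to be $0$ or $1$ is stated without derivation --- and even granted, multiplicity-freeness alone would not yield that the resulting relation is the graph of a bijection on simple objects, which is what the ``moreover'' clause requires. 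As it stands this is a plan for a proof with the decisive Morita-theoretic computation missing; to complete it you would have to reproduce the parent analysis from \cite{davydov}, or do as the paper does and cite it.
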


We now give an example to clarify the above theorems.
Let $\Delta(G)=\{(g, g):\, g\in G\}$ be a subgroup of $G\times G$, and let $\varphi\in H^2(\Delta(G), \mathbb{C}^{\times})$ be the trivial cocycle ($\varphi=1$). Obviously, the projections of $\Delta(G)$ onto the first and second components are the whole group $G$. Also, since $\Delta(G)\cap (G\times \{e\}) = \Delta(G)\cap (\{e\}\times G)=\{(e, e)\}$, $\varphi(\cdot \v \cdot) $ is non-degenerate on $(\Delta(G)\cap (G\times \{e\}))\times (\Delta(G)\cap ( \{e\}\times G'))$.  As a result, $\mathcal{Z}(G)^{op}$ is equivalent to $\mathcal{Z}(G)$.

To find the action of this equivalence on simple objects we need to compute $\chi_{A(\Delta(G), 1)}$ using~\eqref{eq:character}.
For $g=(g_1, g_2)$ and $h=(h_1, h_2)$ in $G\times G$ we have
\begin{align}
\chi_{A(\Delta(G), 1)}(gh^{\ast})& = \frac{1}{\v \Delta(G)\v}\, \delta_{gh=hg}\sum_{x, y\in G} \delta_{xg_1x^{-1}=yg_2 y^{-1}}\, \delta_{xh_1x^{-1}=yh_2 y^{-1}}\\
& = \delta_{gh=hg}\, \delta_{g_1h_1^{\ast}\overset{G}\sim g_2h_2^{\ast}} \,\v Z(g_1) \cap Z(h_1) \v,
\end{align}
where by $g_1h_1^{\ast}\overset{G}\sim g_2h_2^{\ast}$ we mean there exists $l\in G$ such that $l(g_1h_1^{\ast})l^{-1} = (lg_1l^{-1})(lh_1l^{-1})^{\ast} =g_2h_2^{\ast}$.

Now define
\begin{align}
\Phi(gh^{\ast}) = \sum_{(\overline{x}, \rho)} \chi_{(\overline{x}, \rho)\boxtimes (\overline{x}, \rho^{\ast})}\, (gh^{\ast}),
\end{align}
where the summation is over all irreducible representations $(\overline{x}, \rho)$ of $D(G)$. We have
\begin{align}
\Phi(gh^{\ast}) & = \sum_{(\overline{x}, \rho)}\, \chi_{(\overline{x}, \rho)}(g_1h_1^{\ast})\, \chi_{(\overline{x}, \rho^{\ast})} (g_2h_2^{\ast}) \label{eq:phi1} \\
& = \delta_{g_1h_1 = h_1g_1} \delta_{g_2h_2 = h_2g_2} \delta_{h_1\overset{G}\sim h_2} \sum_{(\overline{h_1} , \rho)}  \tr_{\rho}(g_1)\, \tr_{\rho^{\ast}}(k_{h_2}^{-1} g_2 k_{h_2})\\
& = \delta_{gh = hg}  \delta_{h_1\overset{G}\sim h_2}
\delta_{g_1\overset{Z(h_1)}\sim  (k_{h_2}^{-1} g_2 k_{h_2}) } \v Z_{Z(h_1)}( g_1 ) \v \\
& = \delta_{gh = hg}  \delta_{g_1h_1^{\ast}\overset{G}\sim g_2h_2^{\ast}} \v Z( g_1 ) \cap Z(h_1) \v \label{eq:phi4},
\end{align}
where in the third line we use the orthogonality relations in the character table of $Z(h_1)$. As a result, $\chi_{A(\Delta(G), 1)} = \Phi$, and then the equivalence of $\mathcal{Z}(G)^{op}$ and $\mathcal{Z}(G)$ sends the simple object $(\overline{x}, \rho)$ to $(\overline{x}, \rho^{\ast})$.

\subsection{$G=\group$}

Let $\mathbf{F}_q$ be the finite field with $q$ elements and denote its additive and multiplicative group by $\mathbf{F}_q^+$ and $\mathbf{F}_q^{\times}$ respectively. Then the semidirect product of these groups is defined as follows. We represent elements of $\group$ by $(a, \alpha)$ where $a\in \mathbf{F}_q^{+}$ and $\alpha\in \mathbf{F}_q^{\times}$, and define $(a, \alpha)(a', \alpha') = (a+ \alpha \times a', \alpha \times \alpha')$ which by abuse of notation is denoted by $(a+ \alpha a', \alpha\alpha')$. The identity element of this group is $e=(0, 1)$ and the inverse of $(a, \alpha)$ is equal to $(a, \alpha)^{-1}= (-\alpha^{-1}a, \alpha^{-1})$.

We will use the following properties regarding the structure of $\group$. The conjugacy class of $(a, \alpha)$ is $\overline{(a, \alpha)}=\{(b, \alpha):\, b\in \mathbf{F}_q^{+}\}$ if $\alpha\neq 1$, and $\overline{(1, 1)}=\{(b, 1):\, b\in \mathbf{F}_q^{+}, b\neq 0\}$.
$K= \{ (a, 1):\, a\in \mathbf{F}_q^{+}  \}$ is a normal subgroup of $\group$ isomorphic to $\mathbf{F}_q^{+}$, and $K=Z(1, 1)$. Also note that for every $(a, \alpha)$ where $\alpha\neq 1$, $\v Z(a, \alpha)\v =q-1$ and $K\cap Z(a, \alpha)=\{e\}$.

Consider a non-trivial irreducible representation of $K$, and let $\pi$ be the corresponding induced representation on $\group$. Then $\tr_{\pi} (e) = q-1$, $\tr_{\pi} (1, 1) =-1$, and $\tr_{\pi} (a, \alpha)=0$ if $\alpha\neq 1$. Since $\sum_{(a, \alpha)} \v \tr_{\pi} (a, \alpha)\v^2 = \v \group\v$, $\pi$ is an irreducible representation.

Now we are ready to state and prove the main result of this section.

\begin{thm}\label{thm:main}
There exists an auto-equivalence of $\mathcal{Z}(\group)$ whose corresponding permutation on simple objects is of the form $PJ$ where $P$ is a transposition of a chargeon-fluxion pair, and $J$ sends every object to its charge conjugation ($J: X \mapsto X^{\vee} $).
\end{thm}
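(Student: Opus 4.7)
My plan is to invoke Theorem~\ref{thm:automorphism} with a carefully chosen subgroup $U \subseteq G \times G$ (where $G = \group$) and a suitable $2$-cocycle $\varphi \in H^2(U, \mathbb{C}^\times)$, then read off the resulting permutation on simple objects from the decomposition of $\chi_{A(U,\varphi)}$ into irreducibles of $D(G \times G)$. The worked example in the preceding subsection shows that the diagonal datum $(\Delta(G),1)$ reproduces the charge conjugation $J$, so to obtain a permutation of the form $PJ$ the data must be modified in a way that twists exactly one chargeon-fluxion pair of equal dimension --- the natural candidates being $(\overline{e},\pi_0)$, where $\pi_0$ is the $(q-1)$-dimensional representation of $G$ induced from a non-trivial character of $K = \{(a,1) : a\in \mathbf{F}_q^+\}$, and the fluxion $(\overline{(1,1)},\mathbf{1})$, which has class-size $q-1$.

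The subgroup I propose is
$$
U = \{((a,\alpha),(b,\alpha)) : a, b \in \mathbf{F}_q^+,\ \alpha \in \mathbf{F}_q^\times\} \subseteq G\times G,
$$
which has order $q^2(q-1)$ and surjects onto $G$ in each component, so the first condition of Theorem~\ref{thm:automorphism} holds; moreover $U\cap(G\times\{e\}) = K\times\{e\}$ and $U\cap(\{e\}\times G) = \{e\}\times K$, both isomorphic to $\mathbf{F}_q^+$. Fixing a non-trivial additive character $\chi:\mathbf{F}_p\to \mathbb{C}^\times$ and the field trace $\tau:\mathbf{F}_q\to \mathbf{F}_p$, the bicharacter $\psi(a,b) = \chi(\tau(ab))$ is non-degenerate on $\mathbf{F}_q^+\times\mathbf{F}_q^+$. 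The key technical step is to exhibit a cocycle $\varphi \in H^2(U,\mathbb{C}^\times)$ (of essentially the form $\varphi\big(((a,\alpha),(b,\alpha)),((a',\alpha'),(b',\alpha'))\big) = \chi(\tau(\alpha a' b))$, up to coboundary) whose commutator $\varphi(\cdot\,|\,\cdot)$ restricts to $\psi$ on $(K\times\{e\})\times(\{e\}\times K)$; then both hypotheses of Theorem~\ref{thm:automorphism} will be satisfied.

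The last step is to decompose $\chi_{A(U,\varphi)}$ by imitating the orthogonality manipulation of~\eqref{eq:phi1}--\eqref{eq:phi4}. For elements $g,h \in G\times G$ whose flux component $h$ avoids the class $\overline{(1,1)}$, the extra cocycle factor in~\eqref{eq:character} collapses to the trivial sum and the resulting contribution reproduces the charge-conjugation pairing exactly as in the diagonal example. The only deviation occurs on the pair $(\overline{e},\pi_0)$ and $(\overline{(1,1)},\mathbf{1})$, where the non-trivial bicharacter $\psi$ matches the characters across the two factors diagonally --- the induced chargeon on one side paired with the fluxion on the other and vice versa --- producing the transposition $P$. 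The principal obstacle is the explicit construction of $\varphi$: one must solve the cocycle identity~\eqref{eq:cocycle} for a concrete formula and then carry out the conjugate-summation in~\eqref{eq:character} carefully enough to confirm that this is the \emph{only} deviation from charge conjugation. What makes the computation tractable is the very special structure of $\group$: $K$ is abelian and normal, its non-identity elements form a single conjugacy class in $G$, and $\pi_0$ has the simple character $(q-1,-1,0,\ldots,0)$, so the Schur orthogonality sums reduce to a handful of cases.
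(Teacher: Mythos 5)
Your overall strategy---choose $U\subseteq G\times G$ and $\varphi\in H^2(U,\mathbb{C}^{\times})$ satisfying the hypotheses of Theorem~\ref{thm:automorphism} and read the permutation off the decomposition of $\chi_{A(U,\varphi)}$---is exactly the paper's, and your chargeon $(e,\pi_0)$ and fluxion $(\overline{(1,1)},\mathbf{1})$ are the right pair. But your subgroup $U=\{((a,\alpha),(b,\alpha))\}$ is the wrong one, and the defect cannot be repaired by a cleverer cocycle. A simple summand $X\boxtimes X'$ can occur in $A(U,\varphi)$ only if its flux, a conjugacy class of $G\times G$, meets $U$; with your $U$ every such class is of the form $(\overline{(a,\alpha)},\overline{(b,\alpha)})$ with the \emph{same} $\alpha$ in both slots, and since for $\alpha\neq 1$ the class of $(a,\alpha)$ is determined by $\alpha$ alone, the induced bijection fixes every flux. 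After composing with the reference equivalence coming from $(\Delta(G),1)$ (which also fixes fluxes), the auto-equivalence you would obtain fixes all magnetic fluxes, whereas $PJ$ must send $(\overline{(0,\alpha)},\rho)$ to $(\overline{(0,\alpha^{-1})},\rho^{\ast})$, and $\overline{(0,\alpha)}\neq\overline{(0,\alpha^{-1})}$ whenever $\alpha^2\neq 1$, i.e.\ for every $q>3$. (Relatedly, you misread the worked example: $(\Delta(G),1)$ yields the pairing $(\overline{x},\rho)\boxtimes(\overline{x},\rho^{\ast})$, which is the \emph{reference} equivalence $\mathcal{Z}(G)^{op}\simeq\mathcal{Z}(G)$, not charge conjugation $X^{\vee}=(\overline{x^{-1}},\rho^{\ast})$.) Your candidate cocycle also fails identity~\eqref{eq:cocycle} on your $U$: with $f,g,h$ having multiplicative parts $\alpha_1,\alpha_2,\alpha_3$, second-component additive parts $b_1,b_2,b_3$ and first-component additive parts $a_1,a_2,a_3$, the two sides of~\eqref{eq:cocycle} differ by $\omega^{\tr_p(\alpha_2 a_3 b_2(\alpha_1^2-1))}$, which vanishes identically only when every $\alpha$ satisfies $\alpha^2=1$, i.e.\ only for $q\le 3$. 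This is why your proposal accidentally succeeds for $\mathbb{Z}_2$ and $S_3$ but not in general.

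The paper's construction takes the anti-diagonal in the multiplicative direction, $U=\{((a_1,\alpha),(a_2,\alpha^{-1}))\}$, with $\varphi(g,h)=\omega^{\tr_p(\alpha a_2 b_1)}$: the $\alpha$ acting on the first component and the $\alpha^{-1}$ acting on the second cancel in the cocycle check, and the flux constraint now pairs $\overline{(a_1,\alpha)}$ with $\overline{(a_2,\alpha^{-1})}=\overline{(a_1,\alpha)^{-1}}$, which is exactly what is needed to produce the inversion of fluxes underlying $J$. One then shows $\chi_{A(U,\varphi)}=\Psi-\Gamma$ with $\Psi=\sum_{(\overline{x},\rho)}\chi_{(\overline{x},\rho)\boxtimes(\overline{x^{-1}},\rho)}$ and $\Gamma$ the correction supported on the chargeon--fluxion pair, and composing with the $(\Delta(G),1)$ equivalence gives $PJ$. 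If you replace your $U$ and $\varphi$ by these, the remainder of your outline (checking conditions 1 and 2 and the case analysis in the character decomposition) goes through essentially as you describe.
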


\begin{proof} According to Theorem~\ref{thm:automorphism} we need to introduce a subgroup $U\subseteq (\group)\times(\group)$ and $\varphi\in H^2(U, \mathbb{C}^{\times})$ which satisfy properties 1 and 2.

Let $U= \{ ((a_1, \alpha), (a_2, \alpha^{-1})):\, a_1,a_2\in \mathbf{F}_q^{+},\, \alpha\in \mathbf{F}_q^{\times} \}$. $U$ is a normal subgroup, its projection onto each component is the whole group $\group$, and $U\cap ((\group) \times\{e\}) = K\times \{e\} $ and $(\{e\}\times (\group))\cap U = \{e\}\times K$, where $K$ is defined above.

Let $p$ be the characteristic of $\mathbf{F}_q$ (so $q$ is a power of $p$), and let $\omega$ be a $p$-th root of unity ($\omega^p=1$). Also, assume that $\tr_p: \mathbf{F}_q \rightarrow \mathbf{F}_p$ is the trace function, i.e. $\tr_p(a)$ is equal to the trace of the $\mathbf{F}_p$-linear map $x\mapsto ax$. Now define $\varphi: U\times U\rightarrow \mathbb{C}^{\times}$ by
\begin{align}
\varphi( g,  h ) = \omega^{\tr_p( \alpha a_2b_1)},
\end{align}
where $g=((a_1, \alpha),( a_2, \alpha^{-1}))$ and $h=((b_1, \beta),( b_2, \beta^{-1}))$. $\varphi$ satisfies~\eqref{eq:cocycle}, and then $\varphi\in H^2(U, \mathbb{C}^{\times})$. Also, it is easy to see that $\varphi(\cdot \v \cdot )$ acts non-degenerately on $(K\times \{e\} )\times( \{e\}\times K )$. As a result, $A(U, \varphi)$ provides an equivalence of $\mathcal{Z}(\group)$ and $\mathcal{Z}(\group)^{op}$.

To obtain the permutation corresponding to this equivalence we need to compute $\chi_{A(U, \varphi)}$. Since $U$ is a normal subgroup we have
\begin{align}
\chi_{A(U, \varphi)}(gh^{\ast}) = \frac{1}{\v U\v} \delta_{gh=hg}\, \delta_{g, h \in U}\, \sum_{k} \varphi(kgk^{-1}, khk^{-1}) \varphi( khk^{-1}, kgk^{-1} )^{-1}.
\end{align}
Letting $g=((a_1, \alpha),( a_2, \alpha^{-1}))$, $h=((b_1, \beta),( b_2, \beta^{-1}))$ and $k=((x_1, \theta), (x_2, \lambda))$, we have
\begin{align}
kgk^{-1} & =  ( (x_1+\theta a_1 - \alpha x_1, \alpha), (x_2+\lambda a_2 - \alpha^{-1} x_2 , \alpha^{-1} )  ),\\
khk^{-1} & =  ( (x_1+\theta b_1 - \beta x_1, \beta), (x_2+\lambda b_2 - \beta^{-1} x_2 , \beta^{-1} )  ).
\end{align}
and thus
\begin{align}
\varphi(kgk^{-1}, khk^{-1}) & = \omega^{\tr_p( ( \alpha x_2+\alpha \lambda a_2 -x_2  )( x_1+ \theta b_1 - \beta x_1  )    )},\\
\varphi( khk^{-1}, kgk^{-1} ) & = \omega^{\tr_p(  ( \beta x_2+\beta \lambda b_2 -x_2  )( x_1+ \theta a_1 - \alpha x_1  )  )}.
\end{align}
Now observe that $gh=hg$ is equivalent to $b_1(\alpha -1) =a_1(\beta -1)$ and $\alpha(1-\beta )a_2 = \beta(1-\alpha) b_2$. So if $g$ and $h$ commute, $\varphi(kgk^{-1}, khk^{-1}) \varphi( khk^{-1}, kgk^{-1} )^{-1}$ is independent of $x_1, x_2$, and we have
\begin{align}
\chi_{A(U, \varphi)}(gh^{\ast}) = \frac{1}{\v U\v} \delta_{gh=hg}\, \delta_{g, h \in U}\, \sum_{x_1,x_2, \theta, \lambda} \omega^{\tr_p( \theta\lambda ( \alpha a_2 b_1 - \beta b_2 a_1)  )}.
\end{align}
Therefore,
\begin{align}\label{eq:27}
\chi_{A(U, \varphi)}(gh^{\ast}) =
\begin{cases}
  \delta_{gh=hg}\, \delta_{g, h \in U}\, (q-1)  & \mbox{if } \alpha b_1 a_2  = \beta b_2 a_1, \\
  - \delta_{gh=hg}\, \delta_{g, h \in U} & \mbox{if } \alpha b_1 a_2  \neq \beta b_2 a_1 .
\end{cases}
\end{align}
Notice that if either $\alpha$ or $\beta$ is not equal to $1$, $g, h\in U$ and $gh=hg$  imply $\alpha b_1 a_2  = \beta b_2 a_1$. Thus~\eqref{eq:27} can be simplified to
\begin{align}
\chi_{A(U, \varphi)}(gh^{\ast}) =
\begin{cases}
  \delta_{gh=hg}\, \delta_{g, h \in U}\, (q-1)  & \mbox{if } \alpha\neq 1 \mbox{ or } \beta\neq 1, \\
  \delta_{gh=hg}\, \delta_{g, h \in U}\, \left( \delta_{a_1 b_2 = a_2 b_1} (q-1) -  \delta_{a_1 b_2  \neq a_2 b_1}\right) & \mbox{if } \alpha = \beta =1 .
\end{cases}
\end{align}

We now need to decompose $\chi_{A(U, \varphi)}$ into irreducible characters. Let
\begin{align}
\Psi(gh^{\ast}) = \sum_{(\overline{x}, \rho)} \chi_{(\overline{x}, \rho) \boxtimes (\overline{x^{-1}}, \rho)} (gh^{\ast}).
\end{align}
By the same steps as in the computation of $\Phi(gh^{\ast})$ in~\eqref{eq:phi1}-\eqref{eq:phi4} we find that
\begin{align}
\Psi(gh^{\ast}) =  \delta_{gh=hg}\, \delta_{ g_1h_1^{\ast}\sim g_2^{-1}(h_2^{-1})^{\ast}     }\, \v Z(g_1 )\cap Z( h_1 ) \v,
\end{align}
where $g=(g_1, g_2)=( (a_1, \alpha), (a_2, \alpha'))$ and $h=(h_1, h_2)=((b_1, \beta), (b_2, \beta'))$.
Observe that if $gh=hg$ and either $\alpha\neq 1$ or $\beta\neq 1$, then $g_1h_1^{\ast}\sim g_2^{-1}(h_2^{-1})^{\ast} $ is equivalent to $g,h\in U$. This fact can be verified simply by writing these conditions in terms of $a_1, a_2, \alpha, \dots$. Also in this case $g,h\in U$ and $gh=hg$ imply $\v Z(g_1 )\cap Z( h_1 ) \v = q-1$. Moreover, if $\alpha=\beta=1$, then $ g_1h_1^{\ast}\sim g_2^{-1}(h_2^{-1})^{\ast}$ is equivalent to $a_1 b_2 = a_2 b_1$, $g_1\sim g_2^{-1}$, and $h_1\sim h_2^{-1}$. Therefore,
\begin{align}
\Psi(gh^{\ast})= \begin{cases}
  \delta_{gh=hg}\, \delta_{g, h \in U}\, (q-1)  & \mbox{if } \alpha \neq 1 \mbox{ or }\beta\neq 1, \\
  \delta_{gh=hg}\, \delta_{g_1\sim g_2^{-1}} \delta_{h_1\sim h_2^{-1}} \delta_{a_1b_2=a_2b_1}\, \v Z(g_1 )\cap Z( h_1 ) \v & \mbox{if } \alpha=\beta=1 .
\end{cases}
\end{align}

Consider the chargeon $C=( e, \pi  )$ ($\pi$ is defined above) and fluxion $F=(\overline{(1, 1)}, \mathbf{1})$ in $\mathcal{Z}(\group)$, and define
$\Gamma = \chi_{C\boxtimes C} + \chi_{F\boxtimes F} - \chi_{C\boxtimes F} - \chi_{F\boxtimes C}$. Then
\begin{align}
\Gamma(gh^{\ast}) & = \left( \chi_C(g_1h_1^{\ast}) -\chi_F(g_1h_1^{\ast}) \right) \left( \chi_C(g_2h_2^{\ast}) - \chi_F(g_2h_2^{\ast}) \right)\\
& = \delta_{gh=hg} \left(  \delta_{h_1=e}\, \tr_{\pi}(g_1) - \delta_{h_1\in \overline{(1, 1)}}  \right)\left(  \delta_{h_2=e}\, \tr_{\pi}(g_2) - \delta_{h_2\in \overline{(1, 1)}}    \right).
\end{align}

If either $\alpha\neq 1$ or $\beta\neq 1$, then $\Gamma(gh^{\ast})=0$ and we have $\chi_{A(U, \varphi)}(gh^{\ast})=\Psi(gh^{\ast})=\Psi(gh^{\ast})-\Gamma(gh^{\ast})$. Also when $\alpha=\beta=1$ by considering a few cases one can verify that $\chi_{A(U, \varphi)}(gh^{\ast}) = \Psi(gh^{\ast}) -\Gamma(gh^{\ast})$. For instance, if ($\alpha=\beta=1$ and) $a_1=0$ and $a_2\neq 0$ we have
\begin{align}
\chi_{A(U, \varphi)}(gh^{\ast}) & =\delta_{g_2h_2=h_2g_2} \delta_{\alpha'=\beta'=1} (\delta_{b_1=0}(q-1) - \delta_{b_1\neq 0})\\
& = \delta_{\alpha'=\beta'=1} (\delta_{b_1=0} (q-1) - \delta_{b_1\neq 0}),
\end{align}
and
\begin{align}
\Gamma(gh^{\ast}) &= \delta_{g_2h_2=h_2g_2}\delta_{\alpha'=\beta'=1} (\delta_{b_1=0}(q-1) - \delta_{b_1\neq 0})( -\delta_{b_2=0} - \delta_{b_2\neq 0})\\
&= \delta_{\alpha'=\beta'=1} (\delta_{b_1=0}(q-1) - \delta_{b_1\neq 0})(-1),
\end{align}
and since $g_1=e$ is not conjugate with $g_2^{-1}\neq e$, $\Psi(gh^{\ast})=0$. Thus $\chi_{A(U, \varphi)}(gh^{\ast})=\Psi(gh^{\ast})=\Psi(gh^{\ast})-\Gamma(gh^{\ast})$.

As a result, $A(U, \varphi)$ corresponds to an equivalence of $\mathcal{Z}(\group)$ and $\mathcal{Z}(\group)^{op}$ which transposes $C$ and $F$ and sends $(\overline{x}, \rho)\neq C,F$ to $(\overline{x^{-1}}, \rho)$. Combining this equivalence with the equivalence of $\mathcal{Z}(G)$ and $\mathcal{Z}(G)^{op}$ given in the previous section we obtain an auto-equivalence of $\mathcal{Z}(\group)$ whose corresponding permutation is $PJ$ where $P$ is equal to the transposition $(C, F)$. We are done.

\end{proof}

$q=2$, the simplest example of this theorem, gives the group $\mathbb{Z}_2$. There are four simple objects in $\mathcal{Z}(\mathbb{Z}_2)$ which can be described by the excitations of the toric code \cite{kitaev}. In this case the corresponding auto-equivalence is given by going to the dual lattice and interchanging Pauli-$x$ and Pauli-$z$ terms in the toric code Hamiltonian. This auto-equivalence can also be described by a domain wall \cite{liang}.

For $q=3$ the group $\group$ is isomorphic to $S_3$, and the chargeon and fluxion constructed in the proof, correspond to representations $C$ and $F$ described in Section~\ref{sec:s3}. Moreover, in $\mathcal{Z}(S_3)$ the charge conjugation of each particle is itself. As a result, this auto-equivalence of $\mathcal{Z}(S_3)$ only transposes $C$ and $F$, which means that these two particles in $\mathcal{Z}(S_3)$ are \emph{indistinguishable}.

\section{Chargeon-fluxion symmetry as a modular invariant}\label{sec:4}

The corresponding $S$-matrix to $\mathcal{Z}(G)$ is defined in~\eqref{eq:s-matrix}. The $T$-matrix is a diagonal one that contains the \emph{twist numbers} of simple objects on the diagonal. For $\mathcal{Z}(G)$, $T$ is given by
\begin{align}\label{eq:t-matrix}
T_{(\overline{g}, \pi)(\overline{g}, \pi)} = T_{(\overline{g}, \pi)} = \frac{\tr_{\pi} (g)}{\tr_{\pi}(e)}.
\end{align}
The pair of matrices $(S, T)$ is called a \emph{modular data}, and a \emph{modular invariant} corresponding to $(S, T)$ is a matrix $M$ that commutes with both $S$ and $T$, and such that all entries of $M$ are non-negative integers and $M_{\mathbf{0}\mathbf{0}}=1$ ($\mathbf{0}$ is the trivial object).
Clearly, the permutation corresponding to an auto-equivalent of a modular tensor category commutes with both $S$ and $T$ and is a modular invariant. However, a modular invariant may not even be a permutation and then may not come from an auto-equivalence.

In this section we study permutation matrices which form a modular invariant of $\mathcal{Z}(G)$. In particular, we classify all groups $G$ for which there exists a modular invariant of the form $P$ or $PJ$, where $P$ is a transposition of a chargeon-fluxion pair. Notice that $J$ always commutes with both $S$ and $T$ (it can easily be seen from the formulas~\eqref{eq:s-matrix} and~\eqref{eq:t-matrix} in the case of $\mathcal{Z}(G)$; for a proof in the general case see \cite{bakalov}). Thus, $PJ$ is a modular invariant if and only if $P$ is a modular invariant.

\subsection{Near-fields}\label{sec:4.1}

By the result of Section~\ref{sec:3}, all groups $\group$, defined in terms of a finite field, admit a transposition of a chargeon-fluxion pair as a modular invariant. Here we show that every group with a modular invariant of this form is isomorphic to $\near$ where $\mathbf{H}$ is a \emph{near-field}.

\begin{definition}
A set $\mathbf{H}$ with two binary operations $+$ and $\times$ is called a near-field if
\begin{enumerate}
\item $(\mathbf{H}, +)$ is an abelian group with the identity element $0$.

\item $0\times x = x\times 0 =0$ for every $x\in \mathbf{H}$.

\item $(\mathbf{H}\setminus 0, \times)$ is a group with the identity element $1$.

\item the multiplication is distributive from left with respect to the addition: $x\times (y + z) = x\times y + x\times z$. (Distributivity from right is not assumed.)
\end{enumerate}
\end{definition}

The class of all finite near-fields is completely known: there is a method for constructing finite near-fields due to Dickson \cite{dickson}, and it has been shown by Zassenhaus \cite{Zassenhaus} that all finite near-fields except precisely seven of them, are given by Dickson's construction.

For a near-field $\mathbf{H}$ one can consider an action of $\mathbf{H}^{\times}$ on $\mathbf{H}^+$ and define a group structure on $\near$ as follows. Elements of $\near$ are denoted by $(a, \alpha)$ where $a\in \mathbf{H}^{+}$ and $\alpha\in \mathbf{H}^{\times}$, and $(a, \alpha)(b, \beta) = (a+\alpha\times b , \alpha\times \beta)$. This multiplication turns $\near$ to a group with the identity element $e=(0, 1)$. (Notice that to obtain a group we must define the action of $\mathbf{H}^{\times}$ on $\mathbf{H}^{+}$ by multiplication from \emph{left}, and multiplication from right does not work.) $K=\{(a, 1):  a\in \mathbf{H}^+ \}$ is a subgroup of $\near$ isomorphic to $\mathbf{H}^{+}$. On the other hand, it is easy to see that all elements of $K\setminus e$ are conjugate. Thus, $K$ is an abelian group all of whose elements, except identity, have the same order. As a result, the size of this group $\v K\v= \v \mathbf{H}\v =q$ is a power of a prime number, and $K\simeq \mathbf{H}^+ \simeq \mathbf{F}_q^{+}$.

We will also use the fact that the centralizer of the multiplicative group of every near-field (with more than $2$ elements) is non-trivial. This property can be verified by checking Dickson's near-fields as well as the other seven near-fields classified by Zassenhaus (see \cite{hall}).

\subsection{A group with a chargeon-fluxion symmetry is isomorphic to $\near$}

We now state the main result of this section.

\begin{thm} \label{thm:uniqueness}
Suppose that the permutation matrix $P$ corresponding to a transposition of a chargeon-fluxion pair forms a modular invariant for $\mathcal{Z}(G)$. Then $G\simeq \near$ where $\mathbf{H}$ is a near-field. Conversely, for every group $\near$ there exists such a modular invariant.
\end{thm}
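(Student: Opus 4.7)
The plan is to turn the commutation $PS=SP$ into a finite list of scalar identities $S_{CY}=S_{FY}$ (for $Y\neq C,F$) and $S_{CC}=S_{FF}$, and mine these for structural information about $G$; the $T$-condition is automatic since $C=(\overline{e},\pi)$ and $F=(\overline{g},\mathbf{1})$ both have twist $1$. I would evaluate the $S$-matrix formula~\eqref{eq:s-matrix} at three well-chosen families of $Y$. First, $Y=\mathbf{0}$ yields $\dim\pi=|\overline{g}|$. Next, the diagonal identity $S_{CC}=S_{FF}$, after substituting this dimension, reduces to $|\overline{g}\cap Z(g)|=|\overline{g}|$, so $\overline{g}\subseteq Z(g)$; a short direct check (using that $g$ commutes with every $kgk^{-1}$) then shows any two elements of $\overline{g}$ commute, so $N:=\langle\overline{g}\rangle$ is an abelian normal subgroup contained in $Z(g)$. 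Finally, for each irrep $\rho\neq\pi$ of $G$ I would take $Y=(\overline{e},\rho)$; the identity $S_{CY}=S_{FY}$ collapses to $\tr_\rho(g^{-1})=\dim\rho$, forcing $\rho(g)=I$, and hence every irrep of $G$ other than $\pi$ factors through $G/N$.

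The next step is a counting argument. The sum of squared dimensions gives $|G/N|=\sum_{\rho\neq\pi}(\dim\rho)^2 = |G|-|\overline{g}|^2$, which rearranges to $|N|=|Z(g)|/(|Z(g)|-|\overline{g}|)$. Combined with the trivial bounds $|\overline{g}|+1\leq|N|\leq|Z(g)|$, this forces $|Z(g)|=|\overline{g}|+1$ and therefore $N=Z(g)$, $\overline{g}=N\setminus\{e\}$. Every nonidentity element of $N$ is then $G$-conjugate to $g$, so all have the same order, which must be prime; hence $N$ is elementary abelian of order $q=p^k$ for some prime $p$. Since $\gcd(q,q-1)=1$, Schur--Zassenhaus supplies a complement $H$ with $G=N\rtimes H$, $|H|=q-1$; as $H\cap N=\{e\}$ is the stabilizer of $g$ under conjugation, the $H$-action on $N\setminus\{e\}$ is sharply transitive.

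I would then equip $N$ with a near-field structure in the standard way. Fix $1\in N\setminus\{e\}$ as the multiplicative identity; for $a\in N\setminus\{e\}$ let $h_a\in H$ be the unique element with $h_a\cdot 1 = a$, and define $a\times b:=h_a\cdot b$, with $0\times b=b\times 0:=0$. Left distributivity follows from $H$ acting on $N$ by automorphisms, associativity from the identity $h_{a\times b}=h_ah_b$, and the map $h\mapsto h\cdot 1$ identifies $H$ with $\mathbf{H}^\times$; together with the semidirect decomposition, this yields $G\simeq\near$. For the converse, I would run the whole chain backward inside $\near$: the structural identities $Z(g)=N\simeq\mathbf{H}^+$, $\overline{g}=N\setminus\{e\}$, and ``every $\rho\neq\pi$ factors through $G/N$'' all hold by definition, and a short case split on whether the magnetic flux of $Y$ is trivial, equals $\overline{g}$, or corresponds to $\alpha\neq 1$ verifies $S_{CY}=S_{FY}$ and $S_{CC}=S_{FF}$ directly from~\eqref{eq:s-matrix}.

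I expect the main obstacle to be the counting step that collapses the inequalities to $|Z(g)|=|\overline{g}|+1$: this is where the three independent constraints extracted from the $S$-matrix must be combined simultaneously, and it is what really pins the near-field down. A secondary subtlety is the left--right convention in the semidirect product, since near-field multiplication is only left-distributive and the action of $\mathbf{H}^\times$ on $\mathbf{H}^+$ must be by left multiplication for the product to be a group.
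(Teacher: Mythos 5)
Your proof is correct, and it departs from the paper's in two substantive ways, both defensible and one a genuine improvement. First, where the paper reaches $Z(a)=\{e\}\cup\overline{a}$ through a longer chain --- a fusion-rule argument showing $\overline{a}\,\overline{a}\subseteq\{e\}\cup\overline{a}$, then character-column orthogonality to evaluate $\tr_{\pi}$ on and off $\overline{a}$ and deduce $\v Z(a)\v=\v\overline{a}\v+1$ --- you extract $\overline{g}\subseteq Z(g)$ from the single diagonal identity $S_{CC}=S_{FF}$ and then pin everything down by counting irreducibles of $G/N$ and squeezing $\v N\v=\v Z(g)\v/(\v Z(g)\v-\v\overline{g}\v)$ between $\v\overline{g}\v+1$ and $\v Z(g)\v$; this is tighter, though the paper's detour also produces $\tr_{\pi}(h)=0$ off $\{e\}\cup\overline{a}$ and $\tr_{\pi}(a)=-1$, which it reuses when verifying the converse. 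Second, and more significantly, to split $G$ as $\mathbf{H}^{+}\rtimes\mathbf{H}^{\times}$ the paper first builds the near-field on $Z(a)$ via coset representatives and then invokes the fact that the multiplicative group of every finite near-field has non-trivial centralizer --- a fact it justifies only by appeal to the Dickson--Zassenhaus classification --- in order to manufacture a complement $Z(g)$. Your use of Schur--Zassenhaus (legitimate here since $\gcd(q,q-1)=1$, and since $N$ is abelian only the elementary cohomological case is needed) produces the complement $H$ directly, and the sharply transitive conjugation action of $H$ on $N\setminus\{e\}$ then yields the near-field multiplication with associativity from $h_{a\times b}=h_a h_b$; this removes the paper's only reliance on the classification of finite near-fields. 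The one thing you should state explicitly is that $C,F\neq\mathbf{0}$ (immediate from the requirement $M_{\mathbf{0}\mathbf{0}}=1$ in the definition of a modular invariant, or from the paper's fusion argument in step (a)): you need $g\neq e$ both to get $\overline{g}\cap\{e\}=\emptyset$ in the lower bound on $\v N\v$ and to know that $\pi$ does not descend to $G/N$, without which the count $\v G/N\v=\v G\v-\v\overline{g}\v^{2}$ would fail. This is a one-line patch, not a gap.
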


\begin{proof}
We first show that there exists a chargeon-fluxion pair in $\mathcal{Z}(\near)$ that forms a modular invariant.

Consider a non-trivial representation of the abelian subgroup $K\subseteq \near$ (defined above), and denote its induced representation on $\near$ by $\pi$. Let $C=(e, \pi)$ and $F=(\overline{a}, \mathbf{1})$, where $a=(1,1)\in \near$. We claim that the permutation $P$ which exchanges $C$ and $F$ is a modular invariant.

By the definition of $\pi$, $\dim \pi=q-1$, $\tr_{\pi}(a)=-1$ and $\tr_{\pi}(h) = 0$ for every $h\notin K$. Then since $\sum_{g} \v\tr_{\pi}(g)\v^2 =q(q-1)$, $\pi$ is an irreducible representation. Also a dimension-counting argument shows that all other irreducible representations of $\near$ come from an irreducible representation of $\mathbf{H}^{\times} \simeq (\near)/K$, and then for every such representation $\mu$, $\tr_{\mu}(a)=\tr_{\mu}(e)=\dim \mu$.

$P$ commutes with $T$ because $T_{C}=T_{F}=1$. To prove $PS=SP$ we should show that $S_{CX}=S_{FX}$ for every irreducible representation $X\neq C, F$ of $D(\near)$, and $S_{CC}=S_{FF}$. This is a straightforward computation given the structure of $Z(a)=K$ and the irreducible representations of $\near$ described above.

Now consider an arbitrary group $G$, let $C=(e, \pi)$ be a chargeon and $F=(\overline{a}, \mathbf{1})$ a fluxion in $\mathcal{Z}(G)$, and assume that the permutation $P$ which interchanges $C$ and $F$ commutes with the corresponding $S$-matrix. Notice that since by the Verlinde formula the fusion rules are computed in terms of $S$ and $PSP^{-1}=S$, the fusion rules are also symmetric with respect to $C$ and $F$. We prove $G\simeq \near$ in the following steps.\\

\noindent(a) $\pi\neq 1$ and $a \neq e$.\\

$\mathbf{0}=(e, 1)$ is the unique representation such that $\mathbf{0}\otimes X \simeq X$, so its fusion rules cannot be the same
as any other representation. Therefore, $C$ and $F$ are different from $\mathbf{0}$. $\Box$\\

\noindent(b) $\dim \pi = \v \overline{a}\v$.\\

Since $\mathbf{0}\neq C, F$ we have $S_{C\mathbf{0}}=S_{F\mathbf{0}}$. Then
\begin{align} \label{eq:dim}
\frac{\dim \pi}{\vert G\vert}=\frac{1}{\vert Z(a)\vert},
\end{align}
or equivalently $\dim \pi = \v \overline{a}\v$. $\Box$ \\

\noindent(c) $\{e\}\cup\overline{a}$ is a subgroup of $G$.\\

Let $X=(\overline{h},\mu)$ be a representation such that $\overline{h}$ is different from $\{e\}$ and $\overline{a}$. Since $C$ has a trivial magnetic flux, $C\otimes X$ is equivalent to the sum of representations whose magnetic flux is equal to $\overline{h}$. Thus, the magnetic flux of any representation in $F\otimes X$ should also be $\overline{h}$. This means that $\overline{a}\,\overline{h}=\overline{h}$ for any $h\notin \{e\}\cup\overline{a}$. As a result,  $\overline{a}\, \overline{a}\subseteq \{e\}\cup\overline{a}$ which implies that $\{e\}\cup\overline{a}$ is closed under multiplication and forms a subgroup. $\Box$\\

For every $X=(\overline{h}, \mu)$ we have
\begin{align}
S_{CX}=\frac{1}{\v G\v\cdot \v Z(h)\v}\sum_{k\in G} \tr_{\pi}(kh^{-1}k^{-1})\tr_{\mu}(e)=
\frac{\tr_{\pi}(h^{-1})\,\dim \mu\ }{\v Z(h)\v},
\end{align}
and
\begin{align}
S_{FX}= \frac{1}{\v Z(h)\v \cdot \v Z(a)\v} \sum_{khk^{-1}\in Z(a)}
\tr_{\mu}(k^{-1}a^{-1}k).
\end{align}
Therefore, if $X\neq C, F$
\begin{align}\label{eq:tr-pi-h}
\frac{\tr_{\pi}(h^{-1})\,\dim \mu\ }{\v Z(h)\v} = \frac{1}{\v Z(h)\v \cdot \v Z(a)\v} \sum_{khk^{-1}\in Z(a)}
\tr_{\mu}(k^{-1}a^{-1}k).
\end{align}

\noindent(d) For any irreducible representation $\mu$ of $G$ different from $\pi$ we have
\begin{align}
\tr_{\mu}(a^{-1})=\tr_{\mu}(a)=\dim \mu=\tr_{\mu}(e).
\end{align}

Let $h=e$ in~\eqref{eq:tr-pi-h} and note that $k^{-1}a^{-1}k$ is a conjugate of $a^{-1}$ in $Z(h)=G$. Thus $\tr_{\mu}(k^{-1}a^{-1}k)=\tr_{\mu} (a^{-1})$ and
\begin{align}
\frac{\dim \pi\,\dim \mu\ }{\v G\v} = \frac{\tr_{\mu}(a^{-1})}{\v Z(a)\v}.
\end{align}
Then by~\eqref{eq:dim} we obtain $\tr_{\mu}(a^{-1})=\dim \mu$. $\Box$\\

\noindent(e) $\tr_{\pi}(h)=0$, for any $h\notin \{e\}\cup \overline{a}$, and $\tr_{\pi}(a)=\tr_{\pi}(a^{-1})=-1$.\\

The column $\overline{h}$ of the character table of $G$ is orthogonal to columns $e$ and $\overline{a}$. On the other hand, by (d) columns
$e$ and $\overline{a}$ coincide except at the representation $\pi$. Therefore, $\tr_{\pi}(h)=0$. $\tr_{\pi}(a)=-1$ can be shown using (b), and the orthogonality of $\pi$ and the trivial representation of $G$. $\Box$\\

\noindent(f) $\v Z(a)\v =\v \overline{a}\v +1$.\\

Because of the orthogonality of columns $e$ and $\overline{a}$ of the character table of $G$ we have
\begin{align}
\sum_{\mu} \tr_{\mu}(e) \tr_{\mu}(a)^{\ast}=0,
\end{align}
where the sum is over all irreducible representations of $G$. Thus $\sum_{\mu\neq \pi} (\dim \mu)^2 - \dim \pi =0$. On the other hand, we know that
$\sum_{\mu} (\dim \mu)^2 = \v G\v$. Therefore, $\v G\v - (\dim \pi)^2 - \dim \pi =0$ which by using $\dim \pi =\v a\v $ gives $\v Z(a)\v =\v \overline{a}\v +1$. $\Box$\\

\noindent(g)  $Z(a) = \{e\} \cup \overline{a}$.\\

According to (f) it is sufficient to show that $h\notin Z(a)$ for every $h\notin \{e\}\cup \overline{a}$.
This fact can easily be seen from~\eqref{eq:tr-pi-h} by letting $\mu=\mathbf{1}$. $\Box$\\

\noindent(h) $Z(a)\simeq \mathbf{F}_q^{+}$ where $q$ is a power of a prime number, and $\v G\v = \v Z(a)\v\cdot \v \overline{a}\v = q(q-1)$.\\

Since $Z(a)=\{e\} \cup \overline{a}$ is a normal subgroup, $Z(b)=Z(a)$ for every $b\in \overline{a}$. Thus $Z(a)$ is an abelian subgroup. On the other hand, the order of all elements of $\overline{a}=Z(a)\setminus e$ is the same. Therefore, $Z(a)$ is isomorphic to $\mathbf{F}_q^+$ where $q$ is a power of a prime number. $\Box$\\

For simplicity let $Z(a)=\mathbf{H}$. Then $\mathbf{H}$ is an abelian subgroup of $G$. We show that a multiplication $\times$ can be defined on $\mathbf{H}$ which together with the operation of $\mathbf{H}$ induced from $G$ turns it into a near-field. We then prove that $G\simeq \mathbf{H}\rtimes \mathbf{H}^{\times}$.

Since $\v G/\mathbf{H}\v = \v \overline{a}\v$, and $\mathbf{H}=Z(a)$, the cosets of $G/\mathbf{H}$ are in one-to-one correspondence with elements of $\overline{a}$; for every $b\in \overline{a}$ there exists a unique $\tilde x_b= x_b\mathbf{H} \in G/\mathbf{H}$ such that $x_b ax_b^{-1}=b$. Now define a binary operation $\times$ on $\mathbf{H}$ in the following form. $e\times b = b\times e=e$ for every $b\in \mathbf{H}$, and for $b, c\in \overline{a}$
\begin{align}
b\times c = x_b x_c a x_c^{-1} x_b^{-1}.
\end{align}
$\times$ is well-defined because elements of $\mathbf{H}$ commute with every element of $\bar{a}$.\\

\noindent(i) $\mathbf{H}^{\times}= (\mathbf{H}\setminus e, \times)$ is a group whose identity element is $a$. \\

The inverse of $b$ is $b'$ where $b'=x_b^{-1}a x_b$. The associativity is proved using $\tilde x_{b\times c} = \tilde x_b \tilde x_c$. $\Box$\\

\noindent(j) $\mathbf{H}$ with the induced operation from $G$ as the addition and $\times$ as the multiplication forms a near-field.\\

We need to show that multiplication is distributive from left with respect to addition: $b\times (cd) = (b\times c)(b\times d)$. If one of $b,c,d$ is equal to $e$, it obviously holds; otherwise both sides are equal to $x_b cdx_b^{-1}$. $\Box$\\

In the following we assume that $q=\v \mathbf{H}\v >2$ since otherwise $G\simeq \mathbf{H}\rtimes \mathbf{H}^{\times}$ is obvious.\\

\noindent(k) There exists $g\in G\setminus \mathbf{H}$ such that $G=\mathbf{H}Z(g)$.\\

Since $\mathbf{H}$ is a near-field, the centralizer of $\mathbf{H}^{\times}$ is non-trivial (see Section~\ref{sec:4.1}). This means that there exists $g\in G$ such that $gag^{-1}\neq a$ and $(gag^{-1})\times b = b\times (gag^{-1})$ for every $b\in \mathbf{H}$. In other words, for every $x\in G$, $gx a x^{-1}g = xgag^{-1}x^{-1}$, or equivalently, $\overline{g}\subseteq g\mathbf{H}$. Therefore, $\v \overline{g}\v \leq \v \mathbf{H}\v =q$, and then $\v Z(g)\v \geq q-1$. On the other hand, $\mathbf{H}$ is a normal subgroup of $G$, so $\mathbf{H}Z(g)$ is a subgroup and since $Z(g)\cap \mathbf{H}=Z(g)\cap Z(a)=\{e\}$, the size of this subgroup is equal to $q\v Z(g)\v$. Thus $\v Z(g)\v \leq q-1$ and therefore, $Z(g)$ is a subgroup of order $q-1$ and $G= \mathbf{H}Z(g)$. $\Box$\\

\noindent(l) $G\simeq \mathbf{H}\rtimes \mathbf{H}^{\times}$.\\

Since $G=\mathbf{H}Z(g)$ and $\mathbf{H}\cap Z(g)=\{e\}$, every element of $G$ can uniquely be written in the form of $bk$ where $b\in \mathbf{H}$ and $k\in Z(g)$. It is easy to see that the map which sends $bk\in G$ to $(b, kak^{-1})\in \mathbf{H}\rtimes \mathbf{H}^{\times}$ is an isomorphism. We are done.

\end{proof}

\subsection{Example: two modular invariants in $\mathcal{Z}(A_6)$}\label{sec:a6}

Assume that the transposition $(X, Y)$ forms a modular invariant in $\mathcal{Z}(G)$. Theorem~\ref{thm:uniqueness} classifies all groups for which there exists such a modular invariant when $X$ is a chargeon and $Y$ is a fluxion. If we relax this assumption by keeping $X$ to be a chargeon but assuming $Y=(\overline{a}, \rho)$ is arbitrary, most steps in the proof of Theorem~\ref{thm:uniqueness} (with some variations) still hold. In particular, $\dim \rho =1$  is enough to show that $G\simeq \near$.  (In this case proving (g) needs more work.)

There are two remaining cases. First, both $X$ and $Y$ are chargeon, and second, non of them is chargeon. The first case cannot happen; if $X=(e, \pi)$ and $Y=(e, \pi')$, $S_{\mathbf{0}X}=S_{\mathbf{0}Y}$ implies $\dim \pi=\dim \pi'$. Also, for every $g\neq e $, $S_{X(\overline{g}, \mathbf{1})}=S_{Y(\overline{g},\mathbf{1})}$ is equivalent to $\tr_{\pi}(g)=\tr_{\pi'}(g)$. Thus $\pi=\pi'$.

Now assume that $X=(\overline{a}, \rho)$ and $Y=(\overline{b}, \rho')$, and $a,b\neq e$. Then for every irreducible representation $\pi$ of $G$, $S_{X(e, \pi)} = S_{Y(e, \pi)}$, and we obtain
\begin{align}
\frac{\tr_{\pi}(a^{-1})\dim \rho}{\v Z(a)\v} = \frac{\tr_{\pi}(b^{-1})\dim \rho'}{\v Z(b)\v}.
\end{align}
For $\pi=\mathbf{1}$ we find that $\dim \rho/\v Z(a)\v = \dim \rho' /\v Z(b)\v$, and thus for every $\pi$, $\tr_{\pi}(a^{-1})=\tr_{\pi}(b^{-1})$. Equivalently, $a$ and $b$ belong to the same conjugacy class, and $X,Y$ have the same magnetic flux.

Here we present an example of a modular invariant in the latter case ($X=(\overline{a}, \rho)$ and $Y=(\overline{a}, \rho')$). Let $A_6$ be the alternating group of order six (the group of even permutations over $\{1, \dots, 6\}$).
Let $a=(1, 2)(3, 4)$. $\overline{a}$ is equal to the set of all permutations of the form $(t_1, t_2)(t_3, t_4)$. Then $\v \overline{a}\v = 45$, and $\v Z(a)\v= \v A_6\v/ \v \overline{a} \v = 8$;
\begin{align}
Z(a)=\{ e, a, b_1=(1,2)(5,6), b_2=(3,4)(5,6), b_3=(1,3)(2,4), b_4=(1,4)(2,3),\nonumber\\ c_1=(1,3,2, 4)(5,6), c_2=(1,4,2,3)(5,6) \}.
\end{align}
The conjugacy classes of $Z(a)$ are $\{e\}$, $\{a\}$, $\{b_1, b_2\}$, $\{b_3, b_4\}$, and $\{c_1, c_2\}$, and the character table of $Z(a)$ is as follows.
\begin{align}
\begin{array}{|c|ccccc|}
\hline
       &  e   &   a    &    b_1,b_2   &    b_3,b_4   &    c_1,c_2    \\
\hline
\rho_1 &  1   &   1    &      1       &       1      &       1        \\
\rho_2 &  1   &   1    &      -1      &       -1     &       1         \\
\hline
\rho_3 &  1   &   1    &      1      &       -1     &       -1         \\
\rho_4 &  1   &   1    &      -1      &       1     &       -1         \\
\hline
\mu    &  2   &   -2   &      0      &       0     &       0         \\
\hline
\end{array}
\end{align}
We claim that both transpositions $(X_1, X_2)$ and $(X_3, X_4)$, where $X_i=(\overline{a}, \rho_i)$, are modular invariants.

$T$ is invariant under these transpositions because $T_{X_i}=1$ for every $1\leq i\leq 4$.

For every $Y=(e, \pi)$, $S_{X_iY} = \tr_{\pi}(a^{-1})\dim \rho_i/\v Z(a)\v$, and since $\dim \rho_i =1$ for every $i$, $S_{X_iY}=S_{X_jY}$ for $i,j\in \{1, 2, 3, 4\}$.

For every $Y=(\overline{h}, \pi)$, where $h\notin \{e\}\cup \overline{a}$, we have
\begin{align}
S_{X_iY} = \frac{1}{\v Z(a)\v\cdot \v Z(h)\v} \sum_{k:\, khk^{-1}\in Z(a)} \tr_{\rho_i}(kh^{-1}k^{-1}) \tr_{\pi}(k^{-1}a^{-1}k).
\end{align}
Observe that $c_1, c_2$ are the only elements of $Z(a)$ which can be conjugates of $h^{-1}$, and $\tr_{\rho_i} (c_1) =\tr_{\rho_j}(c_1)$ for $i,j\in \{1,2\}$ and $i,j\in \{3,4\}$. Therefore, $S_{X_iY}=S_{X_jY}$.

Now it remains to show that $S_{X_1X_1}=S_{X_2X_2}$, $S_{X_3X_3}=S_{X_4X_4}$, $S_{X_1X_3}=S_{X_2X_3}=S_{X_2X_4}=S_{X_1X_4}$, and $S_{X_1Y}=S_{X_2Y}=S_{X_3Y}=S_{X_4Y}$, where $Y=(\overline{a}, \mu)$. These equalities can simply be verified given the character table of $Z(a)$.

\section{Conclusion}\label{sec:conclusion}

We have shown that there are non-trivial symmetries in the fusion rule and braiding of particles in the quantum double model of groups of the form $\group$. In particular, there are two particles in $\mathcal{Z}(S_3)$, one of which is a chargeon and the other is a fluxion, which are indistinguishable. Conversely, we have proved that such a symmetry may appear only if the corresponding group is isomorphic to $\near$ for a near-field $\mathbf{H}$.

Some questions arise naturally from our results. First, can we extend the modular invariants described in Theorem~\ref{thm:uniqueness} to an auto-equivalence of $\mathcal{Z}(\near)$? Following the proof of Theorem~\ref{thm:main}, the first difficulty is to define the subgroup $U$; the same subgroup as in Theorem~\ref{thm:main} does not work in general because $\mathbf{H}^{\times}$ is not abelian. Indeed, it is not hard to see that if such a subgroup $U$ exists, then there must be a group automorphism $f: \mathbf{H}^{\times} \rightarrow \mathbf{H}^{\times}$ such that $\alpha$ and $f(\alpha^{-1})$ are conjugate. However, solely from this assumption, we cannot conclude that $\mathbf{H}^{\times}$ is abelian because, for example, all the inner automorphisms of the permutation group $S_n$ have this property.

Another problem is to extend Theorem~\ref{thm:uniqueness} to the case where $F=(\bar{a}, \rho)$ is an arbitrary particle. As mentioned in Section~\ref{sec:a6}, $\dim \rho=1$ again gives $G\simeq \near$. Either proving the same result in the general case or disproving by giving a counterexample is of interest.

Finally, we do not know to what extend these results can be generalized to the twisted quantum double of $\group$ or $\near$.\\

\noindent{\bf Acknowledgements.} We gratefully acknowledge Alexei Kitaev for many clarifications and useful comments. We are also thankful to Liang kong for bringing the work of Ostrik to our attention, and Chris Heunen and Alexei Davydov for many clarifications about notations. SB is partially supported by NSF under Grant No. PHY-0803371 and by NSA/ARO under Grant No. W911NF-09-1-0442.

\small

\end{document}